\author{Edouard~Pauwels, Amir~Beck, Yonina~C.~Eldar,~\IEEEmembership{Fellow,~IEEE}, Shoham~Sabach
\thanks{The work of Edouard Pauwels was partially supported by the Air Force Office of Scientific Research grant number FA9550-15-1-0500. The research of Amir Beck was partially supported by the Israel Science Foundation Grant 1821/16.}
\thanks{Edouard Pauwels is with the Informatics department (IRIT), Universit\'e Toulouse 3 Paul Sabatier, Toulouse 31062, France (e-mail epauwels@irit.fr).}
\thanks{A. Beck is with the department of Industrial Engineering, Technion--Israel Institute of Technology, Haifa, Israel 32000 (e-mail: becka@ie.technion.ac.il).}
\thanks{Y. C. Eldar is with the department of Electrical Engineering, Technion--Israel Institute of Technology, Haifa, Israel 32000 (e-mail: yonina@ee.technion.ac.il).}
\thanks{S. Sabach is with the department of Industrial Engineering, Technion--Israel Institute of Technology, Haifa, Israel 32000 (e-mail: ssabach@ie.technion.ac.il).}
}
\title{On Fienup Methods for Regularized Phase Retrieval}
\newcommand{\RR}{\mathbb{R}}
\newcommand{\CC}{\mathbb{C}}
\newcommand{\NN}{\mathbb{N}}
\newtheorem{theorem}{Theorem}[section]
\newtheorem{lemma}[theorem]{Lemma}
\newtheorem{definition}{Definition}
\newtheorem{example}{Example}
\newtheorem{corollary}[theorem]{Corollary}
\newenvironment{proof}[1][]{\begin{trivlist}\item{\bf Proof #1:\;}}{\hfill $\Box$\end{trivlist}}
\newcommand{\qed}{\nobreak \ifvmode \relax \else
      \ifdim\lastskip<1.5em \hskip-\lastskip
      \hskip1.5em plus0em minus0.5em \fi \nobreak
      \vrule height0.75em width0.5em depth0.25em\fi}
\date{Draft of \today}
\renewcommand{\Re}{{\rm Re}}
\renewcommand{\Im}{{\rm Im}}
\newcommand{\supp}{{\rm supp}}
\colorlet{FRAME}{yellow!5!white}
\newcommand{\FF}{\mathcal{F}}
\DeclareMathOperator*{\argmin}{argmin}
\newcommand{\prox}{\operatorname{prox}}
\newcommand{\sign}{{\rm sign}}
\newcommand{\bx}{{\bf x}}
\newcommand{\bc}{{\bf c}}
\newcommand{\bw}{{\bf w}}
\newcommand{\by}{{\bf y}}
\newcommand{\bz}{{\bf z}}
\newcommand{\beps}{\boldsymbol{\epsilon}}
\begin{document}
\maketitle
\begin{abstract}
Alternating minimization, or Fienup methods, have a long history in phase retrieval. We provide new insights related to the empirical and theoretical analysis of these algorithms when used with Fourier measurements and combined with convex priors.  In particular, we show that Fienup methods can be viewed as performing alternating minimization on a regularized nonconvex least-squares problem with respect to amplitude measurements. We then prove that under mild additional structural assumptions on the prior (semi-algebraicity), the sequence of signal estimates has a smooth convergent behaviour towards a critical point of the nonconvex regularized least-squares objective. Finally, we propose an extension to Fienup techniques, based on a projected gradient descent interpretation and acceleration using inertial terms. We demonstrate experimentally that this modification combined with an $\ell_1$ prior constitutes a competitive approach for sparse phase retrieval.
\end{abstract}
\section{Introduction}

Phase retrieval is an old and fundamental problem in a variety of areas within engineering and physics \cite{hurt2001phase,Eldar_review}. Many applications of the phase retrieval problem involve estimation of a signal from the modulus of its Fourier measurements. This problem is ill posed in general, so that uniqueness and recovery typically require prior knowledge on the input, particularly in one-dimensional problems. Here we focus on the estimation of real sparse signals from their Fourier magnitude, a problem which has been treated in several recent works \cite{Kishore:12,shechtman2014gespar,ohlsson2014conditions,SESSY11}.

A longstanding line of algorithms to tackle the phase retrieval problem involve application of the alternating minimization method which alternate between the constraints in time and the Fourier magnitude constraints
\cite{gerchberg1972practical,fienup_reconstruction_1978,fienup1982phase}. These methods were pioneered by the work of Gerchberg
and Saxton and later extended by Fienup; see \cite{bauschke2002phase} for an optimization point of view on these techniques and a rich historical perspective.
Alternating minimization approaches have also been recently applied to phase retrieval from random measurements \cite{netrapalli2015phase}. The main advantage of this class of algorithms is their simplicity and scalability.

A more recent approach to phase retrieval is to formulate the recovery as a smooth nonconvex least-squares estimation problem and use dedicated techniques to estimate the signal using continuous optimization algorithms that guarantee convergence to stationary points. The GESPAR algorithm \cite{shechtman2014gespar} is an example of this approach which is based on the Gauss-Newton method coupled with sparsity priors.
For phase retrieval with random measurements, gradient descent methods have been proposed and analyzed such as Wirtinger flow \cite{candes2014phase} and truncated amplitude flow \cite{WGE16}. Both treat least-squares objectives where Wirtinger flow measures the loss with respect to the squared-magnitude of the measurements while the amplitude flow approach performs a truncated gradient descent on an amplitude objective.
Another line of work suggests the use of matrix lifting and semidefinite programming based relaxations \cite{waldspurger2015phase,ohlsson2012compressive,jaganathan2012recovery,CESV13,SESSY11}. These techniques are limited by the size of problems that can be tackled using available numerical solvers.

Our main contribution is to propose a new look at alternating minimization algorithms for phase retrieval in the context of Fourier measurements and convex priors. We refer collectively to these techniques as Fienup methods.
The use of Fourier measurements is less flexible than general measurements and is less suited for statistical analysis. On the other hand, the Fourier transform has very strong structure which allows for richer algorithmic constructions and analysis.

 As a first step we provide two new interpretations of Fienup algorithms. First we show that these techniques are naturally linked to a nonsmooth nonconvex least-squares problem with respect to an amplitude objective.
 Fienup approaches can then be understood as majorization-minimization methods for solving this problem. Second, we demonstrate that Fienup algorithms can be viewed as a projected gradient descent scheme to minimize a smooth convex objective function over a nonconvex constraint set. This observation  allows to characterize the behaviour of the algorithm and develop extensions based on known ideas for accelerating gradient methods using inertial terms \cite{beck2009fast}. We then specialize these results to the case of $\ell_0$ and $\ell_1$ priors, leading to a new inertial gradient scheme, which we refer to as FISTAPH: FISTA for PHase retrieval.

On the theoretical side, we show that if the convex prior is well structured (semi-algebraic or more generally representable), then the sequence of signal estimates produced by Fienup has a smooth convergence behaviour. Recall that, broadly speaking, an object is said to be semi-algebraic if it can be represented by systems of polynomial inequalities. The notion of smooth convergence is a very desirable property, even more in nonconvex settings where it is usually not possible to obtain global convergence estimates. The convergence analysis follows well established techniques from tame optimization \cite{attouch2013convergence, bolte2014proximal}. These techniques build upon the Kurdyka-\L ojasiewicz (KL) property which holds for many classes of functions \cite{loja1963propriete,kurdyka1998gradients,bolte2007lojasiewicz,bolte2007clarke}. We then provide numerical experiments based on synthetic problems to compare Fienup with $\ell_0$ and $\ell_1$ priors, GESPAR \cite{shechtman2014gespar}, Wirtinger flow (or gradient) methods \cite{candes2014phase} with $\ell_0$ and $\ell_1$ priors and FISTAPH. Numerical results suggest that the latter combined with an $\ell_1$ prior constitutes a very competitive alternative for sparse phase retrieval.

The rest of the paper is organized as follows. Section \ref{sec:problem_form} introduces our notation and states the problem of interest more formally. We also introduce several mathematical definitions that are required for the rest of the paper and review the numerical algorithms that are used in subsequent sections. Section \ref{sec:interpretations} describes our  characterization of Fienup methods in the context of phase retrieval from Fourier measurements with convex priors. We detail the relation of Fienup with a nonsmooth nonconvex least-squares problem as well as its interpretation as projected gradient descent. Our main convergence result and our new FISTAPH algorithm are presented in Section \ref{sec:convergence}. Simulation results are provided in Section \ref{sec:experiments}.

\section{Problem Formulation and Mathematical Background} \label{sec:problem_form}
\subsection{Notation}
Throughout the paper vectors are denoted by boldface letters. For a vector $\bx \in \CC^n$, $\bx[i]$ is the $i$-th entry of $\bx$, $i=1,2,\ldots, n$ and $\supp(\bx)$ is the support of $\bx$, namely, the set $\left\{ i = 1, 2, \ldots n;\; \bx[i] \neq 0 \right\}$. Furthermore, $\|\bx\|_0$ counts the number of nonzero entries of the vector $\bx$: $\|\bx\|_{0} = |\supp(\bx)|$ and $\|\bx\|_p$ denotes the $\ell_p$ norm of $\bx$ for $p \in \RR_+$. The notations $|\cdot|$, $\Re(\cdot)$, $\Im(\cdot)$ and $\bar{\cdot}$ describe the modulus, real part, imaginary part and complex conjugate, respectively, defined over the field of complex numbers. If their argument is a vector, then they should be understood component-wise. Similarly, basic operations, e.g. powers, are taken component-wise when applied to vectors. For $\bx \in \CC^n$ and $N \in \NN$, $\FF(\bx, N) \in \CC^N$ is the vector composed of the $N$ first coefficients of the discrete Fourier transform of $\bx$ (obtained by  zero padding if $n <N$). For simplicity, we use the shorthand notation $\FF (\bx) = \FF(\bx,n)$ to denote the standard discrete Fourier transform of $\bx \in \CC^n$ and $\FF^{-1}$ to denote its inverse. For a set $S$, $\delta_S \colon S \to \RR \cup \{+\infty\}$ is the indicator function of $S$ ($0$ if its argument is in $S$, $+\infty$ otherwise) and $P_S$ denotes the Euclidean orthogonal projection onto the set $S$.
\subsection{Phase Retrieval}
Given $\bx_0 \in \RR^n$, we consider the data acquisition process
\begin{align}
	\label{eq:dataAcq}
	\bc = |\FF(\bx_0)| + \beps,
\end{align}
where $\beps \in \RR^n$ is an unknown vector of errors. In the rest of the paper, we actually assume that $\bc$ has positive entries (it is always possible to set the potential negative entries of $\bc$ to zero). The phase retrieval problem consists of producing an estimate $\hat{\bx} \in \RR^n$ of $\bx_0$ based solely on the knowledge of $\bc$ given by (\ref{eq:dataAcq}).

As mentioned in the introduction, phase retrieval of one-dimensional vectors from Fourier measurements requires the use of prior knowledge. We focus on support and sparsity inducing priors. For $J \subseteq \left\{ 1,2,\ldots,n \right\}$, we define the set $X_J = \left\{ \bx \in \RR^n;\; \supp(\bx) \subseteq J \right\}$. The prior function that we use will be denoted by $g \colon \RR^n \to \RR \cup \{+\infty\}$. We focus on the following priors (for a given $J$):
\begin{itemize}
	\item $g \colon \bx \mapsto \|\bx\|_0 + \delta_{X_J}(\bx)$, or $\ell_0$-based nonconvex prior.
	\item $g \colon \bx \mapsto \|\bx\|_1 + \delta_{X_J}(\bx)$, or $\ell_1$-based convex prior.
\end{itemize}
In the experimental section, we compare between these two classes of priors. The algorithmic derivations in this paper will be made under the assumption that $g$ is proper and lower semicontinuous, and the main convergence result (c.f. Theorem \ref{th:convergence}) will require in addition convexity of $g$.
                In order to efficiently implement the proposed algorithm, we need to focus on priors for which the proximity operator \cite{moreau1965proximite} is easy to compute. We provide several examples of such priors in Section \ref{sec:tools}.

In the rest of the paper, $\bc \in \RR^n_+$ denotes modulus measurements which are assumed to be given, fixed and obtained through (\ref{eq:dataAcq}). Given $\bc \in \RR^n_+$, we define $Z_{\bc} = \left\{ \bz \in \CC^n;\; |\FF(\bz)| = \bc \right\}$ as the set of values $\bz$ that could have produced $\bc$ (ignoring the noise). To estimate $\bx_0$, we consider the regularized least-squares problem
\begin{align}
	\label{eq:mainProblem}
	\min_{\bx\in \RR^n, \bz\in Z_{\bc}} \frac{1}{2}\|\bx - \bz\|_2^2 + g(\bx),
\end{align}
where $g$ encodes our prior knowledge. Our algorithmic approach consists of employing an alternating minimization method, or one of its variants, to solve the above formulation.

\subsection{Prior Algorithms for Phase Retrieval}
\label{sec:algorithms}
We briefly review several existing algorithms for phase retrieval that will be used in our experiments in Section \ref{sec:experiments}.

One approach to sparse phase retrieval is the GESPAR algorithm which is based on the damped Gauss-Newton method in conjunction with an $\ell_0$ prior \cite{shechtman2014gespar}. Damped Gauss-Newton allows to solve smooth, nonlinear least-squares problems. The work of \cite{candes2014phase} is based on the notion of Wirtinger derivatives to treat the same smooth least-squares problem as GESPAR. The notion of Wirtinger derivative is needed since the objective is not differentiable (holomorphic) as a function of complex variables (see \cite{candes2014phase} for details). In the case of real valued functions of real variables, the Wirtinger derivative reduces to a standard gradient (up to a constant multiplicative factor).
An obvious extension of these types of methods is the use of proximal decomposition, or forward-backward methods which consist in alternating a gradient step on the smooth part of the objective with a proximal step on the nonsmooth part \cite{beck2009fast,beck2009gradient}. This is the approach that we use in the numerical experiments to treat phase retrieval with priors.

Finally, we consider alternating minimization methods that are the main focus of this work. This approach consists of solving (\ref{eq:mainProblem}) by applying the alternating minimization algorithm. The special structure of the problem allows to perform each partial minimization efficiently. In particular, the projection onto $Z_{\bc}$ is easy, as described below in (\ref{eq:solutionZ}). These types of methods are also referred to as Fienup algorithms. A deeper interpretation of this approach is given in Section \ref{sec:interpretations}.

\subsection{Tools from Convex and Nonsmooth Analysis}
\label{sec:tools}
Throughout the paper, our results will be based on tools from convex and nonsmooth analysis which we review here.

The gradient of a differentiable function $f$ is denoted by $\nabla f$. This concept admits extensions to nonsmooth analysis; the subgradient of a nonsmooth function $g$ is denoted by $\partial g$. For convex functions, subgradients correspond to tangent affine lower bounds. This definition no longer holds for nonconvex functions. In this case, the proper understanding of subgradients involves much more machinery which will not be discussed here. We only consider the notion of a Fr\'echet critical point which generalizes classical first order criticality for differentiable functions (see \cite{rockafellar1998variational}).
\begin{definition}[Fr\'{e}chet critical point]
	\label{def:regularCriticalValue}
	Let $S \subseteq \RR^n$ be a closed set and $f\colon \RR^n \to \RR$ be a lower semicontinuous function. We say that $\bar{\bx} \in S$ is a Fr\'{e}chet critical point of the problem
	\begin{align*}
		\min_{\bx \in S} f(\bx)
	\end{align*}
	if
\begin{align*}
		\underset{
			\begin{subarray}{c}
				x \to \bar{x} \\
				x \neq \bar{x} \\
				x \in S
			\end{subarray}
		}{\lim\inf}& \quad \frac{f(x) - f(\bar{x})}{\|x - \bar{x}\|} \geq 0.
	\end{align*}
	In other words, the negative variations of $f$ in $S$ around $\bar{x}$ are negligible at the first order.
\end{definition}
We will also heavily use the notion of the proximity operator of a function.
\begin{definition}[Proximity operator]
	For a nonsmooth function $g: \RR^n \to \RR \cup \{+\infty\}$, the (potentially multivalued) proximity operator is denoted by $\prox_g$ and defined by
	\begin{align}
	\label{eq:defProx}
	\prox_g(\bx) \equiv  \argmin_{\by \in \RR^n} \left \{ \frac{1}{2}\|\bx - \by\|_2^2 + g(\by)\right\}.
\end{align}
\end{definition}
Note that when $g$ is proper lower semicontinuous and convex, this operator is single valued.

We next provide a few examples of such functions with their proximity operators; many more can be found, for example, in \cite{combettes2011proximal}.
\begin{example}[Proximity operators]\hfill

	\begin{itemize}
		\item {\bf Support prior:} If $C \subseteq \RR^n$ is a closed convex set, then $\prox_{\delta_C}$ is the Euclidean projection onto $C$. This can be used for example to encode knowledge about the support of the signal $\bx_0$ by choosing $C = X_J$ for some $J \subseteq \left\{ 1, 2, \ldots, n \right\}$. In this case, the projection simply consists in setting the coefficients $\bx[i]$ to $0$ for $i \not\in J$.
		\item {\bf Sparsity prior:} If $g$ is the $\ell_1$ norm, then the proximal operator is the soft thresholding operator.\footnote{\label{footnote_soft} The soft thresholding operator is given by ${\mathcal T}_{\alpha}(\bx)_i = {\rm sgn}(x_i)\max\{ |x_i|-\alpha,0\}$. If $g(\bx) = \lambda \|\bx\|_1$ for some $\lambda>0$, then $\prox_{g}(\bx) = {\mathcal T}_{\lambda}(\bx)$.}  This can be combined with support information prior by first setting the coefficients outside of the support to $0$ and then applying the soft thresholding operator.
		\item {\bf Change of basis:} Suppose that $D$ is an $n \times n'$ real matrix such that its columns form an orthonormal family, that is $D^T D$ is the identity in $\RR^{n'}$. Suppose that $\tilde{g}\colon \RR^{n'} \to \RR$ is a lower semicontinuous convex function and let $g(\bx)=\tilde{g}(D^T\bx)$. In this case, we have $\prox_{g}( \bx)= \bx + D\left( \prox_{\tilde{g}}(D^T\bx) - D^T\bx \right)$ (see \cite[Table 1]{combettes2011proximal}). This allows to express priors in different orthonormal bases, such as wavelets for example.
	\end{itemize}
\end{example}
It is also worth mentioning that the proximity operator is efficiently computable for some nonconvex priors. For example, if $g = \delta_C$ where $C = \left\{ \bx \in \RR^n;\; \|\bx\|_0 \leq k \right\}$, then the proximity operator is obtained by setting the $n-k$ lowest coefficients (in absolute value) to $0$. This can also be combined with support information.

\section{Fienup, Majorization-Minimization and Projected Gradient}
\label{sec:interpretations}

In this section we expand on the alternating minimization approach to (\ref{eq:mainProblem}) leading to the Fienup family of algorithms. For this section, the prior term $g$ in (\ref{eq:mainProblem}) is taken to be a general proper lower semicontinuous   function. We begin by describing the algorithm and then provide two interpretations of it.
\subsection{Alternating Minimization Algorithm}
The alternating minimization algorithm applied to problem (\ref{eq:mainProblem}) is explicitly written below.

\bigskip
\fcolorbox{black}{Ivory2}{\parbox{7cm}{{\bf Alternating Minimization (Fienup)} \\
	{\bf Initialization.} $\bx^{0} \in \RR^n$\\
  {\bf General Step.} For $k \in \NN$,
	\begin{align}
		\label{eq:altMinPres}
		\bz^{k+1} &\in \argmin_{\bz \in Z_\bc} \frac{1}{2}\|\bx^k - \bz\|_2^2,\nonumber\\
		\bx^{k+1} &\in     \argmin_{\bx \in \RR^n} \frac{1}{2}\|\bx - \bz^{k+1}\|_2^2 + g(\bx).
	\end{align}
	}
}
\bigskip

The main interest in this scheme is that both partial minimization steps in (\ref{eq:altMinPres}) can be carried out efficiently whenever $g$ is ``proximable", meaning that its prox  (or a member in its prox) is easily computed. First consider, in (\ref{eq:altMinPres}), the partial minimization in $\bz$ with $\bx \in \RR^n$ being arbitrary but fixed. This minimization amounts to computing $P_{Z_{\bc}}(\bx^k)$, the orthogonal  projection of $\bx^k$ onto $Z_\bc$. For a given $\bx \in \RR^n$, all the members in $P_{Z_{\bc}}(\bx)$ are of the form $\bz = \FF^{-1}(\hat{\bz})$, where for $j=1,2,\ldots, n$, we have ($i=\sqrt{-1}$ in the equation below)
\begin{align}
	\label{eq:solutionZ}
	\hat{\bz}[j] =
	\begin{cases}
		\bc[j]\frac{\FF(\bx)[j]}{|\FF(\bx)[j]|},		
		&\text{if } |\FF(\bx)[j]| \neq 0, \\
		\bc[j]\mathrm{e}^{i\theta_j}, & \text{for an arbitrary } \theta_j\, \text{ otherwise}.
	\end{cases}
\end{align}
Next, we treat the subproblem in (\ref{eq:altMinPres}) of minimizing with respect to $\bx$ where $\bz \in \CC^n$ is arbitrary but fixed. The partial minimization in $\bx$ is given by the expression
\begin{align}
	\label{eq:solutionX}
	\text{argmin}_{\bx\in \RR^n} \left \{ \frac{1}{2}\|\bx - \bz\|_2^2 + g(\bx)\right \} = \prox_{g}({\rm Re}(\bz)),
\end{align}
where ${\rm Re}$ is the real part taken component-wise. We have used the definition of the proximity operator of $g$ given in (\ref{eq:defProx}). When this operator is easy to compute, each step of the algorithm can be carried out efficiently.

The iterations of the alternating minimization method are summarized as follows:
\begin{align}
	\label{eq:itSummary}
	\bz^{k+1} &\in P_{Z_\bc}(\prox_{g}({\rm Re}(\bz^k))), \nonumber\\
	\bx^{k+1} &\in \prox_{g}({\rm Re}(P_{Z_\bc}(\bx^k))).
\end{align}
We now consider several special cases of (\ref{eq:itSummary}):
\begin{itemize} \item If $g=0$, then $\prox_g$ is the identity and we recover the original algorithm from Fienup \cite{fienup1982phase}, or alternating projection \cite{bauschke2002phase}.

\bigskip
 \fcolorbox{black}{Ivory2}{\parbox{6cm}{{\bf Fienup} \\
	{\bf Initialization.} $\bx^{0} \in \RR^n$.\\
  {\bf General Step.} For $k \in \NN$,
$$  \bx^{k+1} ={\rm Re}(P_{Z_\bc}(\bx^k)).$$
}}
\bigskip

The convergence result given in Theorem \ref{th:convergence} also holds in this case since constant functions are convex and continuous.
\item If $g(\bx) = \lambda \|\bx\|_1$ for some $\lambda>0$, then $\prox_{g} = {\mathcal T}_{\lambda}$, where ${\mathcal T}_{\lambda}$ is the soft thresholding operator (see footnote on page \pageref{footnote_soft}). We refer to the resulting algorithm as ``AM L1".\\

\bigskip
    \fcolorbox{black}{Ivory2}{\parbox{6cm}{{\bf AM L1} \\
	{\bf Initialization.} $\bx^{0} \in \RR^n$, $\lambda>0$\\
  {\bf General Step.} For $k \in \NN$,
$$  \bx^{k+1} = {\mathcal T}_{\lambda}({\rm Re}(P_{Z_\bc}(\bx^k))).$$
}}
\bigskip
\item If $g=\delta_{C_K}$, where $C_K$ is the set of all $K$-sparse vectors, $C_K = \{ \bx  \in \RR^n: \|\bx\|_0 \leq K\}$, then $\prox_g = P_{C_K}$ is the so-called hard thresholding operator. This operator outputs a vector which is all zeros except for the largest $K$ components (in absolute values) of its input vector which are kept the same. The hard thresholding operator is multivalued and the resulting algorithm, which we term ``AM L0" picks an arbitrary point in its range.\\

\bigskip
     \fcolorbox{black}{Ivory2}{\parbox{6cm}{{\bf AM L0} \\
	{\bf Initialization.} $\bx^{0} \in \RR^n$, $K \in \NN$\\
  {\bf General Step.} For $k \in \NN$,
$$  \bx^{k+1} \in P_{C_K}({\rm Re}(P_{Z_\bc}(\bx^k))).$$
}}
\bigskip
    \end{itemize}
\subsection{Majorization-Minimization Interpretation}
In this section, we focus on partial minimization in $\bz$. We show that the value of this partial minimization leads to a least-squares objective. This allows us to interpret the Fienup algorithm as a majorization-minimization process on this least-squares function. For the rest of this section, for any $\bx \in \RR^n$, we denote by $\bz(\bx)$ an arbitrary but fixed member of $P_{Z_{\bc}}(\bx)$.
\subsubsection{Partial Minimization in $\bz$}
The following lemma provides a connection between partial minimization in $\bz$ and the evaluation of a nonsmooth least-squares objective.
\begin{lemma}
	\label{lem:minZModel2}
	For any $\bx \in \RR^n$, we have
	\begin{align}
		\label{eq:mainIdentity}
		\min_{\bz \in Z_{\bc}} \frac{1}{2}\|\bx - \bz\|_2^2 = \frac{1}{2n}\||\FF(\bx)| - \bc\|_2^2.
	\end{align}
\end{lemma}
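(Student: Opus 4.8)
The plan is to transport the whole minimization to the Fourier domain, where the constraint defining $Z_{\bc}$ becomes a componentwise modulus condition that decouples across coordinates. The essential ingredient is Parseval's identity for the (unnormalized) discrete Fourier transform: for every $\bv \in \CC^n$ one has $\norm{\FF(\bv)}_2^2 = n\norm{\bv}_2^2$, equivalently $\norm{\bv}_2^2 = \frac{1}{n}\norm{\FF(\bv)}_2^2$. Applying this to $\bv = \bx - \bz$ and using linearity of $\FF$ gives
\begin{align*}
\frac{1}{2}\norm{\bx - \bz}_2^2 = \frac{1}{2n}\norm{\FF(\bx) - \FF(\bz)}_2^2 = \frac{1}{2n}\sum_{j=1}^n \left| \FF(\bx)[j] - \FF(\bz)[j] \right|^2.
\end{align*}

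Next I would use that $\FF$ is a bijection on $\CC^n$ to re-parametrize the feasible set. As $\bz$ ranges over $Z_{\bc}$, its transform $\hat{\bz} = \FF(\bz)$ ranges exactly over the vectors with $|\hat{\bz}[j]| = \bc[j]$ for every $j$; conversely every such $\hat{\bz}$ arises from a unique $\bz = \FF^{-1}(\hat{\bz}) \in Z_{\bc}$. Hence minimizing over $\bz \in Z_{\bc}$ is equivalent to minimizing $\frac{1}{2n}\sum_j |\FF(\bx)[j] - \hat{\bz}[j]|^2$ subject to the separable constraints $|\hat{\bz}[j]| = \bc[j]$. Since both objective and constraints split across the index $j$, the problem reduces to $n$ independent scalar problems.

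The core computation is then the scalar problem: for fixed $a := \FF(\bx)[j] \in \CC$ and radius $\rho := \bc[j] \geq 0$, minimize $|a - w|^2$ over $w \in \CC$ with $|w| = \rho$. Geometrically this is the nearest point to $a$ on the circle of radius $\rho$ centered at the origin. When $a \neq 0$ the minimizer is $w = \rho\, a / |a|$ with optimal value $(|a| - \rho)^2$; when $a = 0$ every feasible $w$ gives value $\rho^2 = (|a| - \rho)^2$. In either case the minimum equals $(|\FF(\bx)[j]| - \bc[j])^2$, and the optimal $\hat{\bz}[j]$ recovered here coincides with the explicit expression (\ref{eq:solutionZ}). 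Summing the $n$ scalar minima yields
\begin{align*}
\min_{\bz \in Z_{\bc}} \frac{1}{2}\norm{\bx - \bz}_2^2 = \frac{1}{2n}\sum_{j=1}^n \left( |\FF(\bx)[j]| - \bc[j] \right)^2 = \frac{1}{2n}\norm{ |\FF(\bx)| - \bc }_2^2,
\end{align*}
which is the claimed identity.

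I expect the only delicate points to be bookkeeping rather than conceptual: fixing the Parseval normalization constant consistently with the paper's unnormalized convention for $\FF$ (so that the factor $1/n$ appears), and handling the degenerate case $\FF(\bx)[j] = 0$, where the nearest point on the circle is not unique but the optimal value is still $(|\FF(\bx)[j]| - \bc[j])^2$. Everything else follows directly from the decoupling afforded by Parseval's theorem.
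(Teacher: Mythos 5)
Your proof is correct and rests on the same two ingredients as the paper's: Parseval's identity for the unnormalized DFT (yielding the $1/n$ factor) and the componentwise nearest-point-on-a-circle computation, including the degenerate case $\FF(\bx)[j]=0$. The only difference is one of ordering --- the paper first plugs in the known projection formula (\ref{eq:solutionZ}) and then applies Parseval to evaluate the distance, whereas you apply Parseval first and derive the minimizer from the decoupled scalar problems, which makes your argument self-contained since it proves, rather than cites, the projection formula (\ref{eq:solutionZ}).
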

\begin{proof}
An optimal solution of the minimization problem is given by $\bz = \FF^{-1}(\hat{\bz})$ where $\hat{\bz}$ has the form (\ref{eq:solutionZ}). Now,
\begin{align*}
	\min_{\bz \in Z_\bc} \frac{1}{2}\|\bx - \bz\|_2^2 &= \frac{1}{2}\|\bx - \FF^{-1}(\hat{\bz})\|_2^2\\
	&=\frac{1}{2}\|\FF^{-1}(\FF(\bx) -\hat{\bz}))\|_2^2\\
	&=\frac{1}{2n}\|\FF(\bx) - \hat{\bz}\|_2^2.
\end{align*}
Using the expression of $\hat{\bz}$ in (\ref{eq:solutionZ}), we have for all $j=1, 2, \ldots, n$,
\begin{align*}
	|\FF (\bx)[j] - \hat{\bz}[j]| =
	\begin{cases}
		||\FF (\bx)[j]| - \bc[j]|,		
		&\text{if } |\FF(\bx)[j]| \neq 0, \\
		\bc[j], & {\rm otherwise}.
	\end{cases}
\end{align*}
Putting everything together,
\begin{align}
	\label{eq:mainIdentity}
	\min_{\bz \in Z_\bc} \frac{1}{2}\|\bx - \bz\|_2^2  =\frac{1}{2n}\|\FF(\bx) - \hat{\bz}\|_2^2   = \frac{1}{2n}\||\FF(\bx)| - \bc\|_2^2,
\end{align}
which completes the proof.
\end{proof}
A direct consequence of Lemma \ref{lem:minZModel2} is the following corollary that connects between problem (\ref{eq:mainProblem}) and a regularized nonlinear least-squares problem.
\begin{corollary} The pair $(\bx,\bz)$ is an optimal solution of problem (\ref{eq:mainProblem}) if and only if $\bx$ is an optimal solution of
\begin{equation} \label{defH} \min \left \{ F(\bx) \equiv \frac{1}{2n}\||\FF(\bx)| - \bc\|_2^2+g(\bx) \right \}, \end{equation}
and $\bz = \FF^{-1}(\hat{\bz})$, where $\hat{\bz}$ is of the form given in (\ref{eq:solutionZ}).
\end{corollary}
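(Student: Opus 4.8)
The plan is to exploit the classical partial-minimization (variable projection) principle: since the prior term $g(\bx)$ in (\ref{eq:mainProblem}) does not depend on $\bz$, the joint minimization over the pair $(\bx,\bz)$ can be carried out by first eliminating $\bz$ and then optimizing the resulting reduced objective in $\bx$. This reduces the corollary to a bookkeeping consequence of Lemma \ref{lem:minZModel2}.

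First I would write the joint optimal value as a nested minimization,
\begin{align*}
\min_{\bx\in\RR^n,\, \bz\in Z_{\bc}} \left\{ \frac{1}{2}\|\bx-\bz\|_2^2 + g(\bx) \right\}
= \min_{\bx\in\RR^n}\left\{ g(\bx) + \min_{\bz\in Z_{\bc}} \frac{1}{2}\|\bx-\bz\|_2^2 \right\},
\end{align*}
which is valid precisely because $g(\bx)$ is constant with respect to the inner variable $\bz$ and can therefore be pulled outside the inner minimum. Applying Lemma \ref{lem:minZModel2} to the inner problem replaces it by $\frac{1}{2n}\||\FF(\bx)|-\bc\|_2^2$, so that the reduced problem in $\bx$ is exactly $\min_{\bx} F(\bx)$ with $F$ as in (\ref{defH}). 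This already establishes equality of the two optimal values.

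Next I would translate this value identity into the stated correspondence of minimizers. For the forward implication, suppose $(\bx,\bz)$ solves (\ref{eq:mainProblem}). Then $\bx$ must attain the reduced minimum, that is, $\bx$ minimizes $F$; and for this fixed $\bx$ the component $\bz$ must solve the inner problem $\min_{\bz\in Z_{\bc}} \frac{1}{2}\|\bx-\bz\|_2^2$, which means $\bz \in P_{Z_{\bc}}(\bx)$. By the explicit description of this projection recalled in (\ref{eq:solutionZ}), every such minimizer is of the form $\FF^{-1}(\hat{\bz})$, which is the claimed shape. The converse is symmetric: if $\bx$ minimizes $F$ and $\bz=\FF^{-1}(\hat{\bz})$ is of the form (\ref{eq:solutionZ}), then $\bz$ realizes the inner minimum by Lemma \ref{lem:minZModel2}, so the objective value of $(\bx,\bz)$ equals $F(\bx)=\min_{\bx}F(\bx)$, the joint optimal value, and hence $(\bx,\bz)$ is optimal for (\ref{eq:mainProblem}).

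I do not expect a genuine obstacle here, since the argument is a routine partial-minimization equivalence. The only point requiring a little care is the bookkeeping at the degenerate frequencies where $|\FF(\bx)[j]|=0$: there the optimal phase $\theta_j$ in (\ref{eq:solutionZ}) is arbitrary, so the minimizing $\bz$ for a fixed $\bx$ is not unique and $P_{Z_{\bc}}(\bx)$ is multivalued. This is exactly why the statement must read that $\bz$ is \emph{of the form} (\ref{eq:solutionZ}), rather than equal to a single prescribed vector, and one should phrase the characterization as ``$\bz$ is some member of $P_{Z_{\bc}}(\bx)$'' throughout.
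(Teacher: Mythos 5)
Your proof is correct and matches the paper's approach: the paper presents this corollary as a direct consequence of Lemma \ref{lem:minZModel2}, with exactly the partial-minimization (variable projection) argument you spell out being the intended justification. Your added care about the multivalued projection at frequencies where $|\FF(\bx)[j]|=0$ is consistent with the paper's phrasing that $\bz$ is ``of the form'' (\ref{eq:solutionZ}).
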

Note that in (\ref{defH}), the least-squares objective is defined with respect to the amplitude $|\FF(\bx)|$ and not the magnitude-squared $|\FF(\bx)|^2$.
For random measurements, it has been shown in \cite{WGE16} that the amplitude objective leads to superior performance over the standard magnitude-squared approach.
\subsubsection{Fienup as Majorization-Minimization}
\label{sec:majMin}
In order to understand further the connection with the Fienup algorithm, we define the following auxiliary function:
\begin{align*}
	h (\bx,\by) \equiv \frac{1}{2}\|\by - \bz(\bx)\|_2^2 + g(\by).
\end{align*}
Now, for any $\bx \in \RR^n$, using Lemma \ref{lem:minZModel2}, we have the following properties (recalling the definition of $F$ in (\ref{defH})):
\begin{align}
	\label{eq:majorizationMinimization}
	h(\bx, \by) &= \frac{1}{2}\|\by - \bz(\bx)\|_2^2 + g(\by) \\
	&\geq \frac{1}{2}\|\by - \bz(\by)\|_2^2 + g(\by) = F(\by),\quad \forall \by \in \RR^n,\nonumber\\
	h(\bx, \bx) &= \frac{1}{2}\|\bx - \bz(\bx)\|_2^2 + g(\bx) = F(\bx).\nonumber
\end{align}
In other words, using the convexity of $g$, we have that $h(\bx, \cdot)$ is a $1$-strongly convex global upper bound on the objective $F$. Computing this upper bound amounts to performing partial minimization over $\bz$ in (\ref{eq:mainProblem}). Minimizing the upper bound $h(\bx, \by)$ in $\by$ corresponds to partial minimization over $\bx$ in (\ref{eq:mainProblem}). The upper bound is tight in the sense that we recover the value of the objective at the current point, $h(\bx,\bx) = F(\bx)$. Therefore the alternating minimization algorithm is actually a majorization-minimization method for the nonsmooth least-squares problem
\begin{align}
	\label{eq:mainProblem2Bis}
	\min_{\bx \in \RR^n} \frac{1}{2n}\||\FF(\bx)| - \bc\|_2^2 + g(\bx).
\end{align}
The steps presented in (\ref{eq:altMinPres}) can then be summarized as follows:
\begin{align*}
	\bx^{k+1} &= \argmin_\by \;h(\bx^k, \by) \\
	&= \prox_{g}({\rm Re}(\bz(\bx^k))) \\
	&=\prox_{g}({\rm Re}(P_{Z_\bc}(\bx^k))),
\end{align*}
which is exactly the mapping given in (\ref{eq:itSummary}).

\subsection{Projected Gradient Descent Interpretation}
\label{sec:projGrad}
We now provide an additional interpretation of the alternating minimization algorithm as a projected gradient method for an optimization problem related to (\ref{eq:mainProblem}) which consists of a smooth convex objective and a nonconvex constraint set. This interpretation is valid whenever $g$ is assumed to be proper lower semicontinuous and convex.

For any $\bx \in \RR^n$ and $\bz \in \CC^n$, we can write (\ref{eq:mainProblem}) as $\|\bx - \bz\|_2^2 = \|\bx - \Re(\bz)\|_2^2 + \|\Im(\bz)\|_2^2$. To move from complex numbers to real numbers, we set $\bw_1 = \Re(\bz)$ and $\bw_2=\Im(\bz)$. Defining a new constraint set $\tilde{Z}_{\bc} = \left\{(\bw_1, \bw_2) \in \RR^n \times \RR^n:\; \bw_1 + i \bw_2 \in Z_{\bc}  \right\}$, problem (\ref{eq:mainProblem}) can be  equivalently rewritten in the form
\begin{align}
	\label{eq:mainProblem3}
	\min_{\bx \in \RR^n, (\bw_1, \bw_2)\in \tilde{Z}_{\bc}} \left \{ \frac{1}{2}\|\bx - \bw_1\|_2^2 + \frac{1}{2} \|\bw_2\|_2^2 + g(\bx)\right \}.
\end{align}

Minimizing first w.r.t. $\bx$, (\ref{eq:mainProblem3}) reduces to the following minimization problem in $\bw_1,\bw_2$:
\begin{align}
	\label{eq:mainProblem3:a}
	\min_{(\bw_1, \bw_2) \in \tilde{Z}_{\bc}} 	\left \{ H(\bw_1,\bw_2) \equiv G(\bw_1)+\frac{1}{2}\|\bw_2\|^2 \right \},
\end{align}
where
$$ G(\bw_1) \equiv \min_{\bx \in\RR^n} \left \{\frac{1}{2}\|\bw_1 - \bx\|_2^2 + g(\bx)\right \}.$$
%
The following result allows us to relate the gradient of $H$ to the optimization primitives used in the alternating minimization method.
\begin{lemma}
	\label{lem:gradH}
Assume that $g$ is proper, lower semicontinuous and convex. Then the function $H$ is continuously differentiable, its gradient is $1$-Lipschitz and can be expressed as	
\begin{equation}  \label{345}		\nabla 	H(\bw_1,\bw_2) = \left ( \bw_1 - \prox_g(\bw_1),  \bw_2\right ).\end{equation}
\end{lemma}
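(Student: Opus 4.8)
The plan is to prove the two parts separately but in the natural order: first establish the closed form of $H$ in terms of the Moreau envelope of $g$, then read off differentiability, the gradient formula, and the Lipschitz constant. Recall that $H(\bw_1,\bw_2) = G(\bw_1) + \frac{1}{2}\|\bw_2\|_2^2$, so the $\bw_2$ component is trivial: it is a quadratic whose gradient is $\bw_2$ and which is manifestly smooth. All the work is in the $\bw_1$ component $G$, which is exactly the Moreau envelope (Moreau--Yosida regularization) of the proper lsc convex function $g$.

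First I would recall the standard facts about the Moreau envelope $G(\bw_1) = \min_{\bx} \{\frac{1}{2}\|\bw_1 - \bx\|_2^2 + g(\bx)\}$ when $g$ is proper, lsc and convex. Under these hypotheses the infimum is attained at the unique point $\prox_g(\bw_1)$ (uniqueness is exactly the single-valuedness of the prox noted after the definition in the excerpt), so $G$ is finite-valued everywhere. The key classical result is that $G$ is continuously differentiable with $\nabla G(\bw_1) = \bw_1 - \prox_g(\bw_1)$. I would establish this by the usual argument: write $G(\bw_1) = \frac{1}{2}\|\bw_1 - \prox_g(\bw_1)\|_2^2 + g(\prox_g(\bw_1))$, and use the optimality (subgradient) condition characterizing $\prox_g(\bw_1)$, namely $\bw_1 - \prox_g(\bw_1) \in \partial g(\prox_g(\bw_1))$. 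Combining this with the convexity inequality for $g$ yields, for any $\bw_1, \bw_1'$, a two-sided estimate showing that $\bw_1 \mapsto \bw_1 - \prox_g(\bw_1)$ is exactly the gradient of $G$; continuity of this gradient follows from firm nonexpansiveness of the prox operator.

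Once differentiability of $G$ is in hand, assembling $\nabla H(\bw_1,\bw_2) = (\bw_1 - \prox_g(\bw_1),\, \bw_2)$ is immediate from the separable structure, which gives the claimed formula \eqref{345}. For the Lipschitz constant I would argue that the map $\bw_1 \mapsto \bw_1 - \prox_g(\bw_1)$ is $1$-Lipschitz: since $\prox_g$ is firmly nonexpansive for convex $g$, the complementary map $I - \prox_g$ is also firmly nonexpansive and in particular nonexpansive, giving $\|(\bw_1 - \prox_g(\bw_1)) - (\bw_1' - \prox_g(\bw_1'))\|_2 \leq \|\bw_1 - \bw_1'\|_2$. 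The $\bw_2$-block of the gradient is the identity map, which is also $1$-Lipschitz, and since the two blocks are decoupled the full gradient $\nabla H$ is $1$-Lipschitz.

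The main obstacle is the rigorous justification that $G$ is differentiable with the stated gradient; this is the one nonroutine step, since everything else (the $\bw_2$ quadratic, the separable assembly, the Lipschitz bound from nonexpansiveness) is mechanical. The cleanest route is to invoke the standard Moreau-envelope differentiability theorem for proper lsc convex functions rather than rederiving it, but if a self-contained argument is preferred, the crux is to turn the subgradient optimality condition $\bw_1 - \prox_g(\bw_1) \in \partial g(\prox_g(\bw_1))$ into a first-order Taylor-type expansion of $G$, where convexity of $g$ supplies both the lower bound and (via the prox at the perturbed point) the matching upper bound. I would watch that the hypothesis of convexity of $g$ is genuinely used here: without it the prox is multivalued and $G$ need not be differentiable, so it is essential to cite the assumption of Lemma~\ref{lem:gradH} at precisely this point.
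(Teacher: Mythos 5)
Your proof is correct and takes essentially the same route as the paper: both reduce the statement to the classical differentiability of the Moreau envelope $G$ with $\nabla G(\bw_1) = \bw_1 - \prox_g(\bw_1)$ (the paper cites Moreau directly, as you suggest doing), and then obtain the $1$-Lipschitz bound from nonexpansiveness plus the separable structure in $\bw_2$. The only cosmetic difference is in justifying that $I - \prox_g$ is nonexpansive: the paper uses Moreau's decomposition $I - \prox_g = \prox_{g^*}$ together with nonexpansiveness of proximity operators, whereas you invoke firm nonexpansiveness of $\prox_g$ and pass to its complement --- equivalent standard facts.
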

\begin{proof}
	From Moreau \cite[Proposition 7.d]{moreau1965proximite}, we know that $G$ is differentiable and $\nabla G(\bx) = \bx - \prox_g (\bx) = \prox_{g^*}(\bx)$, where $g^*$ is the conjugate function of $g$, which is convex. The computation of the gradient of $H$ is then immediate. We can use the fact that proximity operators of convex functions are nonexpansive \cite[Proposition 5.b]{moreau1965proximite} to verify that $\nabla H$ is $1$-Lipschitz. Indeed, for any $(\bw_1, \bw_2)$ and $(\tilde{\bw}_1, \tilde{\bw}_2)$, we have
	\begin{align*}
		&\;\|\nabla 	H(\bw_1,\bw_2) - \nabla 	H(\tilde{\bw}_1,\tilde{\bw}_2)\|_2^2 \\
		=&\;\|\prox_{g^*}(\bw_1) - \prox_{g^*}(\tilde{\bw}_1)\|_2^2 + \|\bw_2 - \tilde{\bw}_2\|_2^2 \\
		\leq&\; \|\bw_1 - \tilde{\bw}_1\|_2^2 + \|\bw_2 - \tilde{\bw}_2\|_2^2 \\
		=&\; \|(\bw_1,\bw_2) - (\tilde{\bw}_1,\tilde{\bw}_2)\|_2^2,
	\end{align*}
completing the proof.
\end{proof}


Consider applying projected gradient descent to solve (\ref{eq:mainProblem3:a}). From Lemma \ref{lem:gradH}, we can use a step size of magnitude $1$. In this case, taking into account the form of the gradient given in (\ref{345}), we obtain that the general update step takes the form
\begin{eqnarray*}
	&&(\bw_1^{k+1},\bw_2^{k+1})\\
	&=& P_{\tilde{Z}_{\bc}}((\bw_1^k,\bw_2^k) - \nabla H(\bw_1^k,\bw_2^k))\\
&=&P_{\tilde{Z}_{\bc}}((\bw_1^k,\bw_2^k) - (\bw_1^k-\prox_g(\bw_1^k),\bw_2^k))\\
&=&  P_{\tilde{Z}_{\bc}}(\prox_g(\bw_1^k),0).
\end{eqnarray*}
We now go back to the complex domain by setting $\bz = \bw_1 + i \bw_2$. Note that projecting $(\bw_1, \bw_2)$ onto $\tilde{Z}_{\bc}$ is equivalent to projecting $\bz$ onto $Z_{\bc}$. With this notation, the iterations of projected gradient descent can be summarized by the following iteration mapping (on complex numbers):
\begin{align*}
	\bz^{k+1} &= P_{Z_{\bc}}(\prox_g(\Re(\bz^k))),
\end{align*}
which is exactly the same as (\ref{eq:itSummary}). Therefore, the Fienup algorithm is equivalent to projected gradient descent with unit stepsize applied to the formulation (\ref{eq:mainProblem3:a}). Note that from the point of view of nonsmooth analysis, problem (\ref{eq:mainProblem3:a}) is much better behaved than (\ref{eq:mainProblem2Bis}).
\section{Consequences and Extensions}
\label{sec:convergence}
The interpretations of Section \ref{sec:interpretations} can be used to analyze the convergence of alternating minimization applied to problem (\ref{eq:mainProblem}) and to offer extensions of the method.
\subsection{Convergence Analysis}
Our main convergence result is given in the following theorem. Recall that a function is semi-algebraic if its graph can be defined by combining systems of polynomial equalities and inequalities (for example, the $\ell_1$ norm is semi-algebraic).
\begin{theorem}
	\label{th:convergence}
	Assume that $g$ is proper, lower semicontinuous, convex and semi-algebraic. Then the sequence $\{\bx^k,\bz^k\}_{k \in \NN}$ generated by the alternating minimization algorithm satisfies the following:
	\begin{itemize}
		\item[(i)] It holds that $\sum_{k\geq 0} \|\bx^{k+1} - \bx^k\|_2 < + \infty$ and the sequence $\{\bx^k\}_{k \in \NN}$ converges to a point $\bx^* \in \RR^n$.
		\item[(ii)] For any accumulation point $\bz^*$ of $\{\bz^k\}_{k \in \NN}$, $(\bx^*, \bz^*)$ is a Fr\'echet critical point of problem (\ref{eq:mainProblem}) and $(\bw_1^*,\bw_2^*)=(\Re(\bz^*),\Im(\bz^*))$ is a Fr\'echet critical point of problem (\ref{eq:mainProblem3:a}).
	\end{itemize}
\end{theorem}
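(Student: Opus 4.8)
The plan is to run the analysis on the joint objective of (\ref{eq:mainProblem}) and invoke the Kurdyka-\L ojasiewicz (KL) machinery for descent sequences. I would set $\Phi(\bx,\bz) = \frac{1}{2}\norm{\bx-\bz}_2^2 + g(\bx) + \delta_{Z_{\bc}}(\bz)$, so that the algorithm is exact two-block minimization of $\Phi$ with $\bz^{k+1}\in P_{Z_{\bc}}(\bx^k)$ and $\bx^{k+1}=\prox_g(\Re(\bz^{k+1}))$. Since $g$ is semi-algebraic and $Z_{\bc}$ is a real algebraic set (the constraints $|\FF(\bz)[j]|^2=\bc[j]^2$ are polynomial in $\Re(\bz),\Im(\bz)$), $\Phi$ is semi-algebraic and therefore satisfies the KL property. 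I would then establish the two structural estimates of the tame-optimization scheme of \cite{attouch2013convergence, bolte2014proximal}, being careful to control everything through the $\bx$-increment alone.

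The first step is sufficient decrease. The $\bx$-subproblem is $1$-strongly convex, so its exact minimization gives $\Phi(\bx^k,\bz^{k+1})-\Phi(\bx^{k+1},\bz^{k+1})\geq\frac12\norm{\bx^{k+1}-\bx^k}_2^2$, while the projection step only yields $\Phi(\bx^k,\bz^k)-\Phi(\bx^k,\bz^{k+1})\geq0$; adding gives $\Phi(\bx^k,\bz^k)-\Phi(\bx^{k+1},\bz^{k+1})\geq\frac12\norm{\bx^{k+1}-\bx^k}_2^2$. This asymmetry is the key design choice: I deliberately bound the decrease from below by the $\bx$-increment only and do \emph{not} try to involve $\bz$-increments, which need not be summable, consistent with the theorem claiming only accumulation points for $\{\bz^k\}$. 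The second step is the relative-error bound. Writing the optimality conditions in real coordinates, the prox step gives $\Re(\bz^{k+1})-\bx^{k+1}\in\partial g(\bx^{k+1})$, so the $\bx$-block of $\partial\Phi(\bx^{k+1},\bz^{k+1})$ contains $0$; the projection step gives the proximal-normal-cone inclusion $(\bx^k,0)-(\Re(\bz^{k+1}),\Im(\bz^{k+1}))\in N_{\tilde{Z}_{\bc}}$, so the $\bz$-block contributes exactly $\bx^k-\bx^{k+1}$. Hence $(0,\bx^k-\bx^{k+1},0)\in\partial\Phi(\bx^{k+1},\bz^{k+1})$ and $\dist(0,\partial\Phi(\bx^{k+1},\bz^{k+1}))\leq\norm{\bx^{k+1}-\bx^k}_2$.

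With both the descent and the subgradient bound governed by the same quantity $\norm{\bx^{k+1}-\bx^k}_2$, I would close part (i) with the standard KL recursion, uniformized over the (compact, connected) limit set; compactness follows since $\{\bz^k\}\subseteq Z_{\bc}$ is bounded (indeed $Z_{\bc}$ lies on a sphere) and $\{\bx^k\}=\{\prox_g(\Re(\bz^k))\}$ is then bounded, and $\Phi$ is constant on the limit set by the usual descent-plus-continuity argument. Chaining the KL inequality, concavity of the desingularizing function, and the sufficient decrease produces $\norm{\bx^{k+1}-\bx^k}_2^2\lesssim\norm{\bx^k-\bx^{k-1}}_2\,(\varphi_k-\varphi_{k+1})$, which after an arithmetic-geometric-mean step sums to $\sum_k\norm{\bx^{k+1}-\bx^k}_2<\infty$; thus $\{\bx^k\}$ is Cauchy and converges to some $\bx^*$.

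For part (ii), since $\bx^k\to\bx^*$, any accumulation point $\bz^*$ of $\{\bz^k\}$ satisfies $\bz^*\in P_{Z_{\bc}}(\bx^*)$ by outer semicontinuity of the projection and $\bx^*=\prox_g(\Re(\bz^*))$ by continuity of the prox, and these two relations are precisely the first-order conditions. Because $\bz^*$ is an exact closest point, the associated normal vector is a proximal normal, hence a Fr\'echet normal, so these conditions certify Fr\'echet (not merely limiting) criticality of $(\bx^*,\bz^*)$ for (\ref{eq:mainProblem}) and of $(\Re(\bz^*),\Im(\bz^*))$ for (\ref{eq:mainProblem3:a}). I expect the main obstacle to be the relative-error bound of the second step: the subgradient of the objective is naturally expressed through differences of projections onto the nonconvex set $Z_{\bc}$, which are not Lipschitz, so it cannot be controlled by $\bx$-increments via any smoothness of the projection. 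The resolution is to pass to the joint variable and read the subgradient directly off the normal-cone optimality condition, which reduces it to the $\bx$-increment exactly and makes the asymmetric KL argument go through without any control on $\{\bz^k\}$.
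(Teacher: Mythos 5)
Your proposal is correct and follows essentially the same route as the paper's proof: the paper's Lyapunov function $\tilde{K}(\bx,\bw_1,\bw_2)=K(\bx,\bw_1,\bw_2)+\delta_{\tilde{Z}_{\bc}}(\bw_1,\bw_2)$ is exactly your $\Phi$ in real coordinates, and the paper likewise establishes the asymmetric pair of conditions --- sufficient decrease from $1$-strong convexity of the $\bx$-step and the relative-error inclusion $(0,\bx^k-\bx^{k+1},0)\in\partial\tilde{K}$ read off the normal-cone optimality of the projection step --- before invoking the standard KL recipe of \cite{attouch2013convergence,bolte2014proximal}, with part (ii) obtained by passing to the limit in the same inclusions. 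The only (cosmetic) difference is that you justify Fr\'echet criticality via proximal normals at the exact projection, whereas the paper disposes of this point with a blanket remark on subdifferential regularity; both are valid.
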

The proof is quite technical and is given in Appendix \ref{sec:proof}. The semi-algebraic assumption on $g$ can be relaxed to representability in o-minimal structures over the real field, see \cite{Dries-Miller96}. Therefore, the proposed result actually applies to much more general regularizers. For example, using boundedness of the feasible set in (\ref{eq:mainProblem3:a}), the same result holds if $g$ is analytic (see the dicussion in \cite[Section 5]{bolte2014majorization}). The arguments build upon a nonsmooth variant of the celebrated Kurdyka-\L ojasiewicz (KL) property \cite{loja1963propriete,kurdyka1998gradients,bolte2007lojasiewicz,bolte2007clarke}. Note that direct application of the results of \cite{attouch2013convergence,bolte2014proximal} to projected gradient descent or the results of \cite{bolte2014majorization,pauwels2016value} to majorization minimization is not possible here.

The most important implication of Theorem \ref{th:convergence} is that the sequence of estimated signals converges smoothly to a point which satisfies certain optimality conditions related to problems (\ref{eq:mainProblem}), and (\ref{eq:mainProblem3:a}). This is a departure from standard convergence results that are only able to guarantee that accumulation points of the generated sequence of iterates satisfy certain optimality conditions. It is important to underline that the result is global: it holds for any initialization of the algorithm and does not require any regularity assumption beyond semi-algebraicity and convexity of $g$. This is in contrast with local convergence results which are typical for alternating projection methods \cite{lewis2009local,bauschke2013restricted} that are applicable when the prior term $g$ is an indicator function.
\subsection{Acceleration and Momentum Term}
A benefit of the interpretation of alternating minimization as a projected gradient method is that it allows to propose new variants inspired by known extensions for projected gradient algorithms. In this section we focus on the incorporation of an inertial term that results in an alternating minimization scheme that includes a momentum term. This line of research has a long history in optimization, starting with the development of the heavy-ball method \cite{polyak1964some} which inspired an optimal first order scheme for convex optimization developed by Nesterov \cite{netrapalli2013phase}, and its extension to convex composite problems with the FISTA method \cite{beck2009fast}. Although this last technique was proposed and analyzed only in the context of convex optimization, we consider its application in our nonconvex constrained problem since it empirically provides interesting results. The resulting algorithm is referred to as FISTAPH, and is described as follows.

\bigskip
\fcolorbox{black}{Ivory2}{\parbox{7cm}{{\bf FISTAPH: FISTA for Phase retrieval} \\
	{\bf Initialization.} $\bz^{0} \in Z_{\bc}$ and $\alpha^{k} \in \left[0 , 1\right)$ for all
		$k \in \NN$. Set $\by^0 = \bz^0$ and $\bz^{-1}=\bz^0$.\\
  {\bf General Step.} For $k \in \NN$,
	\begin{itemize}
		\item $\bz^{k+1} \in P_{Z_{\bc}}(\prox_g(\Re(\by^k)))$.
		\item $\by^{k+1} = \bz^k + \alpha^k\left( \bz^k - \bz^{k-1} \right)$.
	\end{itemize}
	}
}
\bigskip

If $\bz^m$ is the last produced iteration, then the output of the algorithm is $\hat{\bx}=\prox_g(\Re(\bz^m))$. A typical choice for the weight sequence is $\alpha^k = \frac{k-1}{k+2}$.
 The question of the convergence of the iterates produced by this method in nonconvex settings is an interesting topic to explore in future research. We may also further consider monotone variants of similar types of methods, see e.g. \cite{beck2009gradient}.

In the numerical experiments we employ FISTAPH in the setting where $g(\bx) = \lambda \|\bx\|_1$ for some $\lambda>0$. In this case, $\prox_g = {\mathcal T}_{\lambda}$ with ${\mathcal T}_{\lambda}$ being the soft thresholding operator with parameter $\lambda$ (see footnote on page \pageref{footnote_soft}).

\section{Experiments and Numerical Results}
\label{sec:experiments}
In this section, we describe experiments and numerical results comparing the different algorithms introduced in Section \ref{sec:algorithms} on the task of phase retrieval.
\subsection{Experimental Setup}
Given measurements $\bc$ as in (\ref{eq:dataAcq}), our problem consists of finding the corresponding $\bx_0$. We focus on the setting in which $\bx_0$ is known to be sparse. We vary  the signal size $n$ (with $J = \{1, 2, \ldots, n/2\}$), the sparsity level $K$ and the signal to noise ratio (SNR). In the following discussion, we will refer to a recovery method $\mathcal{M}$ which can be seen as a black box which takes as input a vector of measurements $\bc \in \RR_+^n$, support information $J$, sparsity level $K$, an initial estimate $\bx$ and outputs an estimate $\hat{\bx} \in \RR^n$ with $\supp(\hat{\bx}) \subseteq J$ and $\|\bx\|_0 \leq K$. One recovery experiment consists of the following:
\begin{itemize}
	\item Fix a recovery method $\mathcal{M}$, a signal length $n$, a support information set $J = \{1, 2, \ldots, n/2\}$, a sparsity level $K$ and an SNR.
	\item Generate $\bx_0 \in \RR^n$ by the following procedure:
		\begin{itemize}
			\item Choose $K$ coordinates among $J$ uniformly at random.
			\item Set these coordinate values at random in $[-4, -3] \cup [3, 4]$.
			\item Set all other coordinates to be $0$.
		\end{itemize}
	\item Generate the measurements $\bc^2 = |\FF(\bx_0)|^2 + \beps$, where $\beps$ is white Gaussian noise according to the chosen SNR. Set negative entries of $\bc^2$ to be $0$ in order to take square root.
	\item Call method $\mathcal{M}$ 100 times with data $(\bc, J, K)$ and randomly generate initial estimates to get 100 candidate solutions $\{\hat{\bx}_{it}\}_{it = 1, 2, \ldots, 100}$.
	\item Compute the best estimate $\hat{\bx}_{best}$ with $best = \argmin_{it = 1,2\ldots,100}\{ \||\FF(\hat{\bx}_{it})| - \bc\|_2^2\}$.
	\item Compare $\hat{\bx}_{best}$ and $\bx_0$ (modulo Fourier invariances) with the following metric (sign is understood coordinatewise with $\sign(0) = 0$):
		\begin{align*}
			recovery(\hat{\bx}_{best}, \bx_0) &=
				\begin{cases}
					1,& \sign(\hat{\bx}_{best}) = \sign(\bx_0)\\
					0,& {\rm otherwise}.
				\end{cases}
		\end{align*}
\end{itemize}
This procedure was repeated 100 times. That is, for each method, signal length, sparsity level and SNR, we have 100 signal recovery experiments, each one associated with a support recovery status. We aggregate these results by considering the recovery probability (average of $recovery(\hat{\bx}_{best}, \bx_0)$) and the median CPU usage for a single simulation (100 calls to the method with different initialization estimates). We use the same initialization for all methods by careful initialization of random seeds. All the experiments were performed on a desk station with two 3.2 GHz Quad Core Intel Xeon processors and 64GB of RAM.
\subsection{Implementation Details}
In our numerical implementation, we used the following stopping criterion.
\begin{itemize}
	\item For alternating minimization and Wirtinger methods: the difference in successive objective value less than $10^{-8}$.
	\item For GESPAR: no swap improvement.
	\item For FISTAPH: the norm of the gradient mapping less than $10^{-8}$.
\end{itemize}
The tuning of these criteria allows to balance accuracy and computational time to some extent.

The $\ell_1$ penalized problem includes a prior sparsity inducing term of the form $g(\cdot) = \lambda \|\cdot\|_1$. It is necessary to tune the $\lambda$ parameter in order to obtain meaningful results. We considered the following strategies for different methods.
\begin{itemize}
	\item For alternating minimization, we set $\lambda = 0.2$ in all experiments.
	\item For Wirtinger based method, $\lambda$ is tuned a posteriori as a function of $n$ and $k$. The experiment was conducted for $\lambda = 1, 2.15, 4.64, 10, 21.5, 46.4, 100, 215, 464$, and we report only the best experiment for each setting.
\end{itemize}
An interesting feature of alternating minimization based methods is that in our experiments, recovery performance was very consistent for different values of $\lambda$ in different settings. As a result, we chose a single value of $\lambda$ for all experiments. The tuning of $\lambda$ for Wirtinger based algorithms is practically much more difficult. In particular, we found that the best $\lambda$ was a highly dependant function of the sparsity level $K$.

Finally, we note that $\ell_0$ based priors have the sparsity level of the estimate, $K$, as a parameter. On the other hand, $\ell_1$ based priors will not necessarily produce $K$-sparse estimates. We therefore use truncation and keep the $K$ largest entries in absolute value of the last iteration.

\subsection{Numerical Results}
The performance in terms of support recovery are presented in Figure \ref{fig:recovery} with the corresponding algorithm run time in Figure \ref{fig:time}. Each point in these plots is an average over 100 simulations of the recovery process, each simulation consisting of 100 random initializations of the method considered. AM corresponds to Fienup methods with different priors, FISTA is the accelerated variant, and WIRT stands for Wirtinger.

We make the following observations from the numerical results:
\begin{itemize}
	\item For alternating minimization, there is a consistent increase in recovery performance by switching from $\ell_0$ to $\ell_1$ based regularization priors.
	\item The $\ell_1$ prior degrades the performance of Wirtinger based methods compared to the $\ell_0$ prior.
	\item FISTAPH consistently provides the best performance and is significantly faster than its competitors.
	\item Fienup with $\ell_0$ prior leads to lower performance compared to GESPAR, which was already reported in \cite{shechtman2014gespar}.
\end{itemize}
As described in the experimental section, we added noise on the squared measurements rather than on the measurements themselves. This noise model is closer to the optimization model considered for GESPAR and Wirtinger flow than model (\ref{eq:mainProblem}) which is related to problem (\ref{eq:mainProblem2Bis}). We tried to change the noise model on a subset of experiments (additive noise on the measurements rather than squared measurements), however,  the performance of the different methods was very similar. Therefore, we only report results related to squared-measurement noise model.
\begin{figure}[ht]
	\centerline{\includegraphics[width=.65\textwidth]{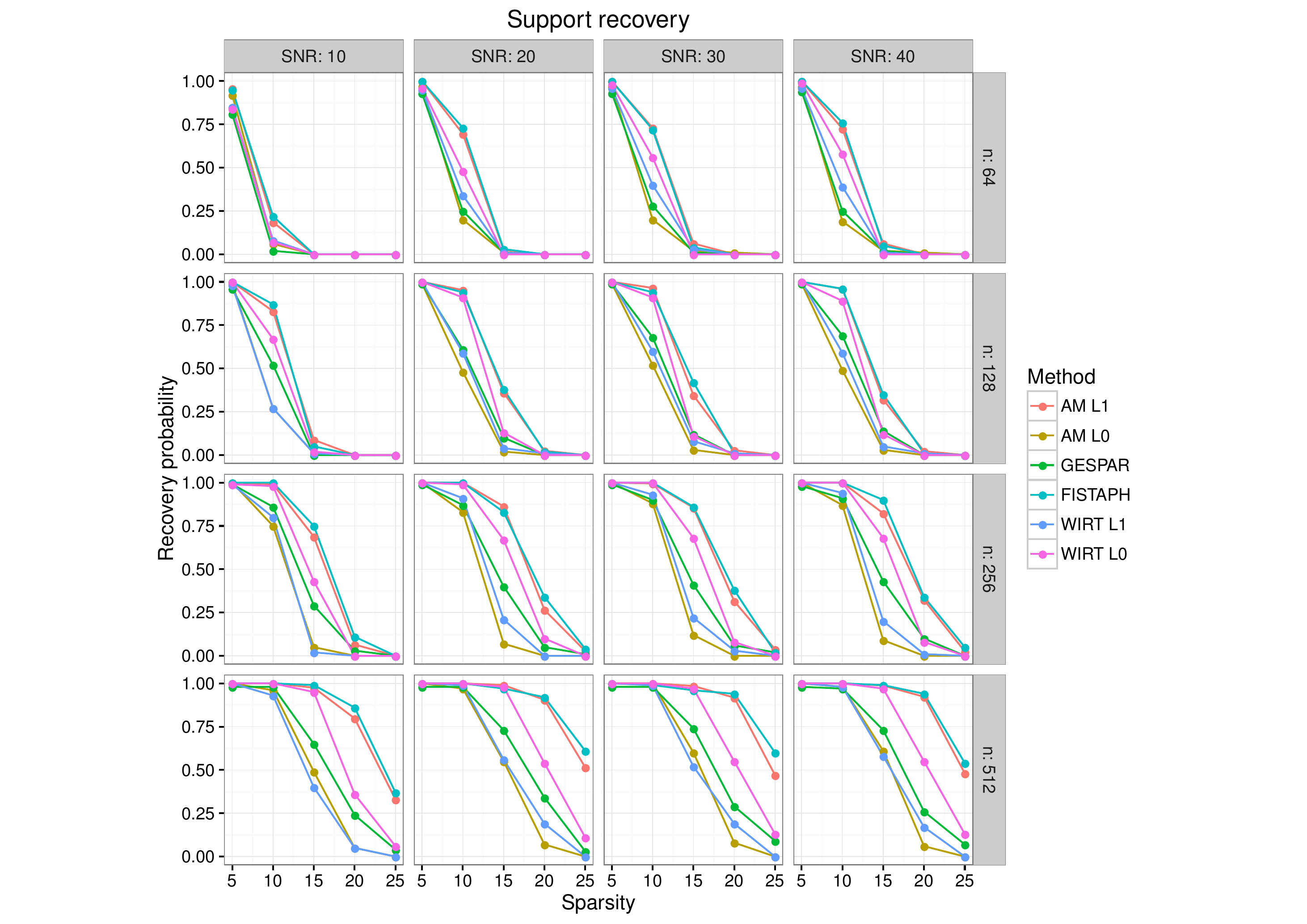}}
	\caption{Support recovery comparison. For each point, the probability is estimated based on 100 simulations. AM stands for alternating minimization and WIRT for WIRTINGER. FISTAPH is described in Section \ref{sec:convergence}, and GESPAR is the method presented in \cite{shechtman2014gespar}.}
	\label{fig:recovery}
\end{figure}

\begin{figure}[ht]
	\centerline{\includegraphics[width=.65\textwidth]{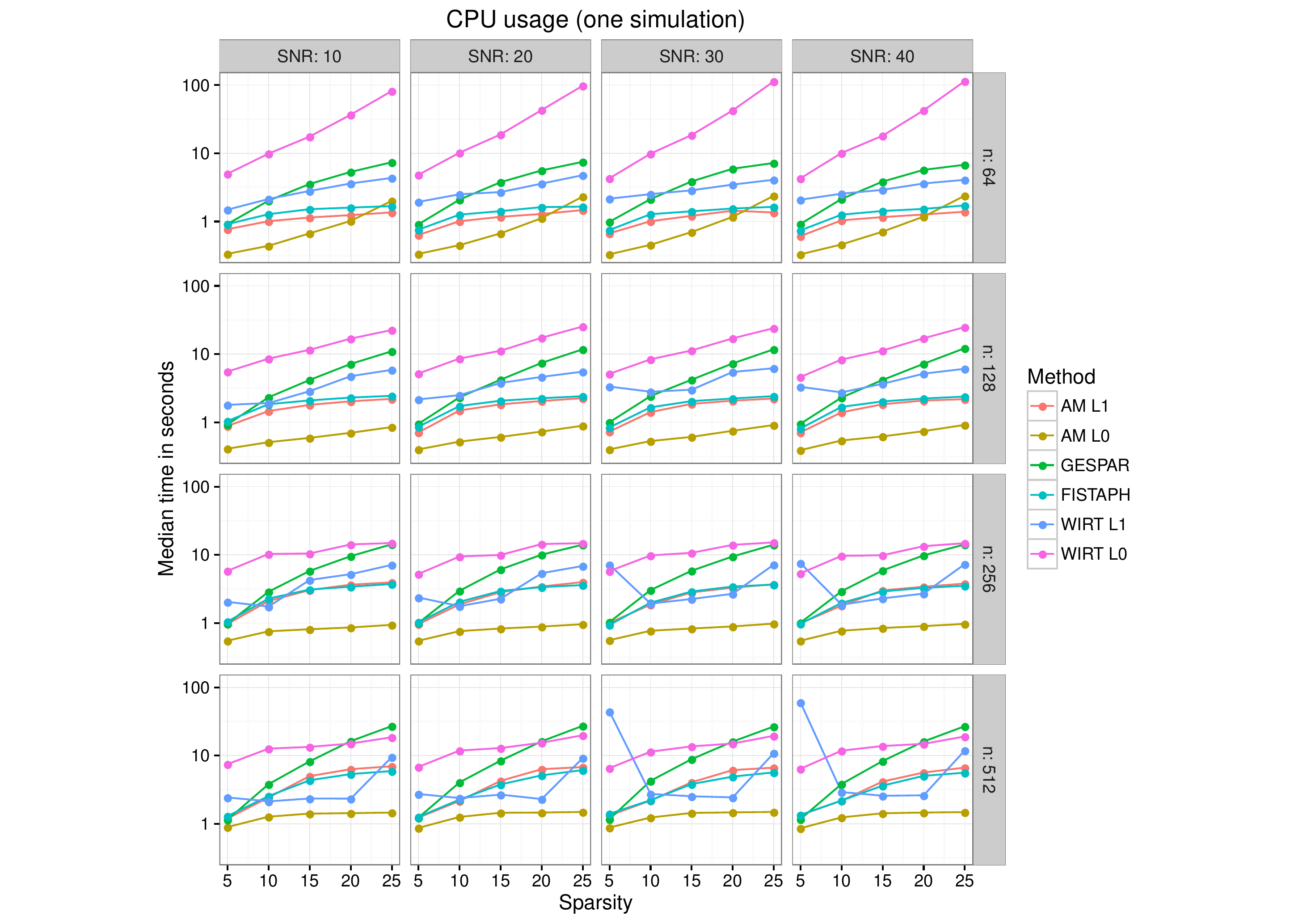}}
	\caption{Timing comparison. The ordinate axis is displayed in logarithmic scale. Each point is the median over 100 simulations, each simulation consisting in 100 random initialization for each method. AM stands for alternating minimization and WIRT for WIRTINGER. FISTAPH is described in Section \ref{sec:convergence}, and GESPAR is the method presented in \cite{shechtman2014gespar}.}
	\label{fig:time}
\end{figure}

\section{Conclusion}
The main theoretical contribution of this work is to provide a strong theoretical basis to the fact that Fienup-type methods, when used with Fourier transforms and convex priors, lead to smoothly converging sequences of estimates. This result holds under minimal assumptions and in particular, it holds globally, independently of the initialization point. Furthermore, we characterize the properties of the limiting point as Fr\'echet critical points of different optimization problems. These results shed light on important properties of one of the most well known algorithms used in the context of phase retrieval. Furthermore, based on an interpretation as a projected gradient method, we proposed a new variant of Fienup with the incorporation of a momentum term which we call FISTAPH.

On the practical side, we demonstrated based on numerical simulations that FISTAPH with $\ell_1$ regularization constitutes a very competitive alternative to other methods in the context of sparse phase retrieval.
\appendix
\section{Proof of Theorem \ref{th:convergence}}
\label{sec:proof}
The proof involves many notions of nonsmooth analysis which can be found in \cite{rockafellar1998variational}. Throughout the proof, we only consider subgradients of subdifferentially regular functions. Each subgradient can be interpreted as a Fr\'echet subgradient and the subgradient set valued mapping is closed.
	We adopt the notation of Section \ref{sec:projGrad}, letting $\bz = \bw_1 + i \bw_2$ for two real vectors $\bw_1$ and $\bw_2$ and consider the constraint set $\tilde{Z}_{\bc} = \left\{(\bw_1, \bw_2) \in \RR^n \times \RR^n;\; \bw_1 + i \bw_2 \in Z_{\bc}  \right\}$. We let $K(\bx, \bw_1, \bw_2) = \frac{1}{2} \|\bx - \bw_1\|_2^2 + \frac{1}{2} \|\bw_2\|_2^2 + g(\bx)$ be the objective function of problem (\ref{eq:mainProblem}) which with this notation becomes
	\begin{align}
		\label{eq:mainProblemBis}
		\min_{\bx \in \RR^n, (\bw_1, \bw_2) \in \tilde{Z}_{\bc}} K(\bx, \bw_1, \bw_2).
	\end{align}
	We will denote by $\delta_{\tilde{Z}_{\bc}}$, the indicator function of the set $\tilde{Z}_{\bc}$ ($0$ on the set and $+\infty$ outside). We set $\tilde{K}(\bx, \bw_1, \bw_2) = K(\bx, \bw_1, \bw_2) + \delta_{\tilde{Z}_{\bc}}(\bw_1, \bw_2)$ so that problem (\ref{eq:mainProblemBis}) is equivalent to the (unconstrained) minimization of $\tilde{K}$.
	\\

	\noindent\textbf{Proof of $(i)$:}
	Using \cite[Proposition 10.5 and Exercise 10.10]{rockafellar1998variational}, the subgradient of this nonsmooth function is of the form
	\begin{align}
		\label{eq:subGradExplicit}
		\partial \tilde{K}(\bx, \bw_1, \bw_2)
		&= \left(
		\begin{array}{c}
			\partial_{\bx}\tilde{K}(\bx, \bw_1, \bw_2)\\
			\partial_{(\bw_1, \bw_2)}\tilde{K}(\bx, \bw_1, \bw_2)
		\end{array}
		\right)\\
		&= \left(
		\begin{array}{c}
			\bx - \bw_1 + \partial g(\bx)\\
			\left(
			\begin{array}{c}
				\bw_1 - \bx\\
				\bw_2
			\end{array}
			\right) + \partial \delta_{\tilde{Z}_\bc} (\bw_1, \bw_2)
		\end{array}
		\right). \nonumber
	\end{align}
	Partial minimization over iterations yields the following
	\begin{align}
		0 &\in \bx^{k+1} - \bw^k_1 + \partial g(\bx^{k+1})\label{eq:partial1}\\
		0 &\in
			\left(
			\begin{array}{c}
				\bw^{k}_1 - \bx^{k}\\
				\bw^{k}_2
			\end{array}
			\right) + \partial \delta_{\tilde{Z}_\bc} (\bw^{k}_1, \bw^{k}_2).\label{eq:partial2}
	\end{align}
	Combining these, we have
	\begin{align}
		\label{eq:subGrad0}
		&\;\left(
		\begin{array}{c}
			0\\
			\left(
			\begin{array}{c}
				\bw^k_1 - \bx^{k+1}\\
				\bw^k_2
			\end{array}
			\right) + \partial \delta_{\tilde{Z}_\bc} (\bw^k_1, \bw^k_2)
		\end{array}
		\right)\\
		\subset&\; \partial \tilde{K}(\bx^{k+1}, \bw^k_1, \bw^k_2).\nonumber
	\end{align}
	Using (\ref{eq:partial2}),
	\begin{align}
		\label{eq:subGrad}
		\left(
		\begin{array}{c}
			0\\
			\bx^k - \bx^{k+1}\\
				0
		\end{array}
		\right)
		\in \partial \tilde{K}(\bx^{k+1}, \bw^k_1, \bw^k_2).
	\end{align}
	Finally, from strong convexity of $\tilde{K}$ with respect to its first argument, we have
	\begin{align}
		\label{eq:condition2}
		&\;\tilde{K}(\bx^{k+1}, \bw^k_1, \bw^k_2) + \frac{1}{2} \|\bx^{k+1} - \bx^k\|^2_2 \nonumber\\
		\leq&\;\tilde{K}(\bx^{k}, \bw^k_1, \bw^k_2) \nonumber\\
		\leq &\;\tilde{K}(\bx^{k}, \bw^{k-1}_1, \bw^{k-1}_2).
	\end{align}
	Since $g$ is semi-algebraic, $\tilde{K}$ is also semi-algebraic. Any semi-algebraic function satisfies the nonsmooth Kurdyka-\L ojasievicz property \cite{bolte2007clarke}. We can now use the now well established recipe \cite[Section 2.3]{attouch2013convergence} \cite[Section 3.2]{bolte2014proximal} with the two conditions (\ref{eq:subGrad}) and (\ref{eq:condition2}) to obtain that the sequence $\left\{ \|\bx^{k+1} - \bx^k\|_2 \right\}_{k \in \NN}$ is summable. This proves statement $(i)$ (convergence holds by Cauchy criterion).
	\\

	\noindent\textbf{Proof of $(ii)$:}
	Using the fact that $\tilde{K}$ has compact sublevel sets, the sequence $\left\{ (\bx^{k+1}, \bw_1^k, \bw_2^k) \right\}_{k\in \NN}$ is bounded and hence has a converging subsequence. We fix an accumulation point $(\bx^*, \bw_1^*, \bw_2^*)$ of the sequence (note that $\bx^*$ is given by $(i)$). We remark that, thanks to (\ref{eq:subGrad0}) and the fact that $\|\bx^{k+1} - \bx^k\| \to 0$, any accumulation point of the sequence is a critical point of $\tilde{K}$. Furthermore, since $\bx^k \to \bx^*$, we have using (\ref{eq:subGrad0}) that
	\begin{align*}
		-\left(
		\begin{array}{c}
			\bw^*_1 - \text{prox}_g(\bw_1^*)\\
			\bw^*_2
		\end{array}
		\right) \in \partial \delta_{\tilde{Z}_\bc} (\bw^*_1, \bw^*_2).
	\end{align*}
	This is actually the criticality condition for problem (\ref{eq:mainProblem3:a}) which proves statement $(ii)$.

\bibliography{refs}
\end{document}